\title{Linear Recurrence Sequence Automata and the Addition of Abstract Numeration Systems\thanks{A preliminary version of this work was presented under the title \emph{Addition in Dumont-Thomas Numeration Systems in Theory and Practice} in the 2024 edition of \emph{Journ\'ees Montoises d'Informatique Th\'eorique}.}}
\titlerunning{Linear Recurrence Sequence Automata}
\author{Olivier Carton\inst{1}\orcidlink{0000-0002-2728-6534} \and Jean-Michel Couvreur\inst{2}\orcidlink{0000-0002-1006-7132} \and\newline Martin Delacourt\inst{2}\orcidlink{0000-0001-5158-4606} \and Nicolas Ollinger\inst{2}\orcidlink{0000-0002-6826-1034}\thanks{Corresponding author}}
\institute{Université Paris Cité, CNRS, IRIF, F-75013, Paris, France
\and 
Université d'Orléans, INSA CVL, LIFO, UR 4022, Orléans, France}
\authorrunning{O. Carton et al.}
\tikzset{
    show control points/.style={
        decoration={show path construction, curveto code={
                \draw [blue, dashed]
                    (\tikzinputsegmentfirst) -- (\tikzinputsegmentsupporta)
                    node [at end, cross out, draw, solid, red, inner sep=2pt]{};
                \draw [blue, dashed]
                    (\tikzinputsegmentsupportb) -- (\tikzinputsegmentlast)
                    node [at start, cross out, draw, solid, red, inner sep=2pt]{};
            }
        },
        postaction=decorate
    },
}
\DeclareRobustCommand{\dplus}{\DOTSB\mathop{\dplus@}\slimits@}
\newcommand{\dplus@}{\vphantom{\sum}\mathpalette\dplus@@\relax}
\newcommand{\dplus@@}[2]{%
  \begingroup
  \sbox\z@{$#1\sum$}%
  \unitlength=\dimexpr\ht\z@+\dp\z@\relax
  \linethickness{%
    \ifx#1\displaystyle 1.8\fontdimen8\textfont3 \else
    \ifx#1\textstyle 1.2\fontdimen8\textfont3 \else
    \ifx#1\scriptstyle 1.2\fontdimen8\scriptfont3 \else
    1.3\fontdimen8\scriptscriptfont3 \fi\fi\fi}
  \vcenter{\hbox{%
    \begin{picture}(0.8,0.8)
    \polygon(0.04,0.4)(0.4,0.04)(0.76,0.4)(0.4,0.76)
    \Line(0.4,0.04)(0.4,0.76)
    \Line(0.04,0.4)(0.76,0.4)
    \end{picture}%
  }}%
  \endgroup
}
\newcommand\NN{\mathbb{N}}
\newcommand\Acal{\mathcal{A}}
\newcommand\Ncal{\mathcal{S}}
\newcommand\KK{\mathbb{K}}
\newcommand\ZZ{\mathbb{Z}}
\newcommand\Zee{\mathbb{Z}}
\newcommand\CC{\mathbb{C}}
\newcommand\RR{\mathbb{R}}
\newcommand\QQ{\mathbb{Q}}
\newcommand\addrel{L^+}
\newcommand\zseq{(0)}
\newcommand\sequ{{\bf u}}
\newcommand\seqv{{\bf v}}
\newcommand\lrec{E}
\newcommand\sshift{\sigma}
\newcommand\val{\operatorname{val}}
\newcommand\rep{\operatorname{rep}}
\newcommand\supp{\operatorname{supp}}
\tikzset{every state/.style={minimum size=0pt}}
\newcommand{\seqnum}[1]{\href{https://oeis.org/#1}{\rm \underline{#1}}}
\begin{document}
\maketitle

\begin{abstract}
Abstract numeration systems encode natural numbers using radix ordered words of an infinite regular language and linear recurrence sequences play a key role in their valuation. Sequence automata, which are deterministic finite automata with an additional linear recurrence sequence on each transition, are introduced to compute various $\ZZ$-rational non commutative formal series in abstract numeration systems. Under certain Pisot conditions on the recurrence sequences, the support of these series is regular. This property can be leveraged to derive various synchronized relations including a deterministic finite automaton that computes the addition relation of various Dumont-Thomas numeration systems and deterministic finite automata converting between various numeration systems. A practical implementation for Walnut is provided.
\end{abstract}

Calculators are handy tools to accelerate tedious computations. A particular type of computation arise from the interplay between formal languages, automata and logic initiated by Büchi \cite{buchi} and described in the comprehensive survey by Bruyère \textit{et al.}~\cite{bruysurvey}. Practical tools implement these ideas, such as the Walnut proof assistant developed by Mousavi, Shallit \textit{et al.}~\cite{walnuta,walnutb}.
Walnut particularly shines at the task to check first-order arithmetical properties on automatic sequences,
provided the computational complexity is tractable on the examples considered. Indeed, the decision procedure is \textsc{Tower}-complete for these arithmetics \cite{schmitz}. 

The present work is motivated by the study of first-order properties of infinite sequences constructed as fixpoints $\tau^\omega(a) = \lim_{k \to\infty}\tau^k(a)$ of some non-uniform substitutions, such as ${\tau : a\mapsto aab,\; b\mapsto cab,\; c\mapsto a}$ for which
$\tau^\omega(a) = aabaabcabaabaabcabaaabcabaabaabcabaabaabcabaaabcabaabaabaab\dots$.
To answer a question on this sequence in a mechanized way, such as \emph{What are the possible lengths of words appearing as cubes inside $\tau^\omega(a)$?}, one could express the question as a first-order predicate over $\left<\NN, +, \leqslant, \tau^\omega(a)\right>$:
\begin{align*}
    \operatorname{FactorEq}(i,j,n) &\equiv \forall k\; k<n \rightarrow \tau^\omega(a)[{i+k}] = \tau^\omega(a)[{j+k}]\\
    \operatorname{Cubes}(n) &\equiv n\geqslant 1 \land \exists i\; \operatorname{FactorEq}(i, i+n, n) \land \operatorname{FactorEq}(i, i+2n, n)
\end{align*}
A required step to mechanize these computations is to identify a numeration system in which the infinite word $\tau^\omega(a)$ is automatic as well as the addition relation $\left\{ (x,y,z) \mid x+y=z \right\}$ and the order relation $\left\{ (x,y) \mid  x\leqslant y \right\}$. From there, the calculator computes the answer, here the possible length correspond to the linear recurrence sequence \cite[\seqnum{A098182}]{Sloane} given by initial values $1, 3, 9$ and recurrence relation $u_{n+3} = 3u_{n+2}-u_{n+1}+u_n$.

An immediate candidate abstract numeration system \cite{rigo2,cant3} to describe the fixpoint as an automatic word is the associated Dumont-Thomas numeration system \cite{dumontthomas,rigo2002,rigo2}. In this numeration system, the appearance of a symbol is explained by the address of that symbol, the path taken in the substitution tree generated by $\tau$ starting from $a$. The language of valid labels is recognized by the addressing automaton of $\tau^\omega(a)$ depicted on Fig.~\ref{fig:adtau}. The states of the addressing automaton are the symbols of the substitution and from a state starts one edge per letter in its image by the substitution, labelled by successive integers. The order relation in this system is the radix order which is always regular. What about the regularity of the language encoding the addition relation?
\begin{figure}[t]
\centering
\subcaptionbox{Addressing automaton $\Ncal_\tau$\label{fig:adtau}}[.35\textwidth]
{\begin{tikzpicture}[thick,shorten >=1pt,node distance=1.5cm,on grid,auto,initial text=] 
   \node[state,initial,accepting] (q_a) {$a$}; 
   \node[state,accepting] (q_b) [right=of q_a] {$b$}; 
   \node[state,accepting] (q_c) [right=of q_b] {$c$}; 
    \path[->] 
    (q_a) edge [loop below] node {0,1} ()
          edge [bend left] node {2} (q_b)
    (q_b) edge [loop above] node  {2} ()
          edge node {0} (q_c)
          edge node {1} (q_a)
    (q_c) edge [bend left=45] node {0} (q_a)
          ;
\end{tikzpicture}}
\subcaptionbox{Sequence automaton $\mathcal{A}_\tau$ (\scriptsize $P=X^3-3X^2+X-1$)\label{fig:seqpsi}}[.55\textwidth]
{\begin{tikzpicture}[thick,shorten >=1pt,node distance=2.2cm,on grid,auto,initial text=] 
   \node[state,initial,accepting] (q_a) {$a$}; 
   \node[state,accepting] (q_b) [right=of q_a] {$b$}; 
   \node[state,accepting] (q_c) [right=of q_b] {$c$}; 
    \path[->] 
    (q_a) edge [loop below] node[align=left] {\scriptsize $0:(0,0,0)$,\\\scriptsize $1:(1,3,9)$} ()
          edge [bend left=20] node {\scriptsize $2:(2,6,18)$} (q_b)
    (q_b) edge [loop above] node  {\scriptsize $2:(2,4,12)$} ()
          edge node {\scriptsize $0:(0,0,0)$} (q_c)
          edge node {\scriptsize $1:(1,1,3)$} (q_a)
    (q_c) edge [bend left=40] node {\scriptsize $0:(0,0,0)$} (q_a)
          ;
\end{tikzpicture}}
\caption{Addressing and sequence automata for $\tau : a\mapsto aab,\; b\mapsto cab,\; c\mapsto a$}
\end{figure}

To the best of the authors' knowledge, the regularity of addition has only been extensively studied in the context of positional numeration systems. Let $(\sequ_i)$ be an increasing sequence of positive integers starting from $\sequ_0=1$. Each number is represented greedily as $n=\sum_{i=0}^m \alpha_i \sequ_i$ where $\alpha_m\cdots\alpha_0$ is the representation of $n$ in the positional numeration system $\sequ$. When the sequence $\sequ$ is a linear recurrence sequence whose characteristic polynomial is the minimal polynomial of a Pisot number, the associated numeration system is a Pisot numeration system. Bruyère and Hansel \cite{bruhan} and Frougny and Solomyak \cite{frousol} have established the regularity of addition for Pisot numeration systems, including their canonical representative which are Bertrand numeration systems.

Dumont-Thomas numeration systems are typically neither positional nor greedy. However, given a current state in the addressing automaton the choice for the next symbol in the address is greedy among available transitions. In the literature, to study a fixpoint of a substitution, one would search for a Pisot numeration system whose minimal polynomial is the characteristic polynomial of the incidence matrix of the substitution and express the fixpoint as an automatic sequence in this positional numeration system. In the present paper, we leverage the techniques used for Pisot numeration systems to obtain a regularity of the language of the addition relation for some abstract numeration systems. In contrast to the positional numeration systems case where the normalization~\cite{frounor} relations are convenient relations to study in order to obtain results on the addition relation, there is no clear equivalent for abstract numeration systems.

\begin{theorem}\label{addDT}
The addition relation of the abstract numeration system associated with a Pisot substitution, a substitution with an incidence matrix whose characteristic polynomial is the minimal polynomial of a Pisot number, is synchronized.
\end{theorem}

\section{Addition relation for abstract numeration systems}

An \emph{abstract numeration system with zeros (ANSZ)} $\Ncal$ is a tuple $(L,A,<,0)$ where $(A,<)$ is an ordered alphabet of minimal element $0\in A$ and $L$ is a regular language over $A$ containing $\varepsilon$ and such that $w\in L \Leftrightarrow 0w\in L$ for all $w\in A^*$. It is a slight variation on the classical regular abstract numeration systems \cite{rigo2,cant3} where padding with 0 on the left is allowed \cite{marsault2017,charlier2022}.

The \emph{representation $\rep_\Ncal(n)$} of an integer $n\in\NN$ is defined by the canonical bijection between $\NN$ and $L\setminus 0^+ L$ ordered in radix order --- words are compared first by length and then by lexicographic order.

The \emph{valuation function $\val_\Ncal$} maps $L$ to $\NN$ so that $\val_\Ncal(w)$ is the only natural number $n\in\NN$ such that $w\in 0^*\rep_\Ncal(n)$.

\subsection{Automatic, synchronized and regular sequences}

Let  $\Ncal$ be an ANSZ $(L,A,<,0)$.
An infinite word ${\bf x}\in B^\NN$ is \emph{$\Ncal$-automatic} \cite{allouche2003,rigo2002} if there exist a deterministic finite automaton producing the output ${\bf x}[\val_\Ncal(w)]$ for all input word $w\in L$.

Let $\left<.,.\right>$ denote the canonical bijection between $\bigcup_{n\geqslant 0}\left(A^n\times A^{\prime n}\right)$ and ${(A\times A')^*}$ and extend it to tuples of arbitrary size by letting $\left<x,y,z\right> = \left<x,\left<y,z\right>\right>$, \textit{etc}.
A relation $R \subseteq \NN^m$ is \emph{$\Ncal$-synchronized} \cite{carpi2010} if
$\left\{ \left<y_1,\ldots,y_m\right> \mid (\val_\Ncal(y_1),\ldots,\val_\Ncal(y_m)) \in R \right\}$
is regular.
The \emph{addition relation} of an ANSZ $\Ncal=(L,A,<,0)$ is the language 
\[
\addrel_\Ncal = {\Bigl\{ \left<x,y,z\right> \;\Bigm|\; x,y,z\in L \land \val_\Ncal(x) + \val_\Ncal(y) = \val_\Ncal(z) \Bigr\}}\quad.
\]

A sequence of natural numbers ${\bf a} : \NN\rightarrow \NN$ is \emph{$\Ncal$-regular} \cite{allouche1992,rigo2002,charlier2022} if the function $f : A^*\rightarrow\NN$, defined by $f(w) = {\bf a}(\val_\Ncal(w))$ on $L$ and $f(w)=0$ otherwise, is a $\ZZ$-rational formal power series. That is, by Schützenberger's theorem \cite{berstel2011}, if it is recognizable by a \emph{linear representation} $(\lambda,\mu,\rho)$ consisting of a row vector $\lambda$, a matrix-valued morphism $\mu : A^* \rightarrow \ZZ^{m\times m}$ and a column vector $\rho$, such that ${\bf a}(n) = \lambda\mu(\rep_\Ncal(n))\rho$ for all $n\geqslant 0$ and $\lambda\mu(w)\rho=0$ if $w$ is not in $L$.

\subsection{ANSZ and Dumont-Thomas numeration systems}

Let $\Ncal$ and $\Ncal'$ be two ANSZ. The \emph{conversion relation} between $\Ncal$ and $\Ncal'$ is the language
${\left\{ \left<x,y\right> \mid \val_\Ncal(x) = \val_{\Ncal'}(y) \right\}}$.
Two ANSZ are said to be \emph{equivalent} if their conversion relation is a regular language.

\begin{proposition}
Equivalence between ANSZ preserve automaticity, synchronicity and regularity of sequences and relations.
\end{proposition}

A \emph{substitution} $\tau : A \rightarrow A^*$ generates a fixpoint $\tau^\omega(a) \in A^\NN$ from a letter $a\in A$ as $\tau^\omega(a) = a\prod_{n\in\NN}\tau^n(w)$ provided that $\tau(a)=aw$ for some $w\in A^+$ such that no $\tau^k(w)$ is the empty word.
Dumont and Thomas \cite{dumontthomas} proved that every prefix $p$ of $\tau^\omega(a)$ can be represented using a unique sequence $\left(p_i,a_i\right)_{i=0}^k\in\left(A^*\times A\right)^*$ as ${p = \prod_{i=0}^k \tau^{k-i}(p_{i})}$ where $p_{i+1} a_{i+1}$ is a prefix of $\tau(a_i)$ for all $i<k$ and $p_0 a_0$ is a prefix of $\tau(a)$. The sequence of the lengths of the $p_i$, the \emph{address} of letter $\tau^\omega(a)_{|p|}$, completely characterizes the prefix $p$. 

The \emph{addressing automaton} $\Ncal_\tau$ associated with the fixpoint $\tau^\omega(a)$ is the deterministic finite automaton with state set~$A$, alphabet $B=\{0,1,\ldots,n-1\}$ where $n=\max_{b\in A}|\tau(b)|$, initial state~$a$, final states~$A$ and whose transitions are defined by $\tau$ as $\delta(b,i) = \tau(b)_i$ for all $b\in A$ and $i \in \{0,\ldots,|\tau(b)|-1\}$. The \emph{Dumont-Thomas numeration system} associated with $\tau^\omega(a)$ is the ANSZ $(L(\Ncal_\tau), B, <, 0)$. The fixpoint $\tau^\omega(a)$ is automatic in this numeration system, it is recognized by $\Ncal_\tau$ where the output of a state $b\in A$ is the letter $b$.

By taking only a subset $F\subseteq A$ as accepting states for $\Ncal_\tau$, one defines the \emph{Generalized Dumont-Thomas numeration system} associated with $\pi(\tau^\omega(a))$ where $\pi$ is the erasing substitution mapping every element of $F$ to itself and all the other elements to $\varepsilon$. It relates with label reduction and surminimisation in \cite{marsault2015}.

\begin{proposition}\label{progdtns}
Every ANSZ is equivalent to a generalized Dumont-Thomas numeration system.
\end{proposition}

\subsection{$\NN$-rational valuation of ANSZ}\label{sec:val}

Let $\Ncal=(L,A,<,0)$ be an ANSZ with $A=\{0,\ldots,m-1\}$ and $<$ the classical order on $\NN$. Let $(Q,A,\delta,0,F)$ be a DFA recognizing $L$ with $Q=\{0,\ldots,n-1\}$ and $\delta : Q\times A \rightarrow Q$ its partial transition map. The identity sequence $\NN \rightarrow \NN$, $n\mapsto n$ is $\Ncal$-regular \cite[Prop. 29]{rigo2001}.
The valuation function $\val_\Ncal$ coincides on $L$ with the $\NN$-rational formal power series given by the linear representation $(\lambda, \mu, \rho)$ of dimension $2n$, that is $\lambda\mu(w)\rho = n'$ for all $n'\in\NN$ and $w\in 0^*\rep_\Ncal(n')$: 
\begin{align}\label{lrval}
\lambda&=\begin{pNiceArray}{cccc|ccc}1&0&\cdots&0&0&\cdots&0\end{pNiceArray}&
    \mu : k\in A &\mapsto \begin{pNiceArray}{c|c}
    \Delta_k & V_k\\\hline
    0 & M_\delta
\end{pNiceArray}&
\rho&=\begin{pNiceArray}{ccc|c}0&\cdots&0&\chi_F\end{pNiceArray}^T
\end{align}

\noindent
where, for all $i,j\in Q$ and for all $k\in A$ :
\begin{align*}
M_\delta[i,j] &= \#\{ a \in A | \delta(i,a) = j \}&
V_k[i,j] &= \#\{ a < k | \delta(i,a) = j \}\\
\chi_F[i] &= \begin{cases}
    1 & \mbox{if~} i \in F\\
    0 & \mbox{otherwise}
\end{cases}&
\Delta_k[i,j] &= \begin{cases}
    1 & \mbox{if~} \delta(i,k) = j\\
    0 & \mbox{otherwise}
\end{cases}
\end{align*}
Thus, it is $\NN$-rational.
The linear representation simulates the DFA with $\Delta_k$ and combines the incidence matrix $M_\delta$ with the vector contributions $V_k$ and vector $\rho$ to count the number of accepting path of the same length that precede the input word in radix order.

\subsection{Addition of ANSZ}

The \emph{synchronized addition} $f\dplus g$ between two series $f : A^*\rightarrow\KK$ and $g : A^{\prime *}\rightarrow\KK$ is the series satisfying $(f\dplus g)(\left<w, w'\right>) = f(w) + g(w')$ for all pair of words of the same length $(w,w')\in A^*\times A^{\prime *}$. The synchronized addition of two $\KK$-rational series is $\KK$-rational by running both series at the same time in a block composition of their matrices.

The \emph{support} $\supp(f)$ of a series $f$ on $A$ is the language $A^*\setminus f^{-1}(0)$ of inputs with non-zero image. As the characteristic function of a regular language is $\NN$-rational and as rational series are closed by external multiplication and by Cauchy product, we obtain the following characterization.

\begin{proposition}\label{prop:addrel}
The addition relation $\addrel_\Ncal$ of an ANSZ $\Ncal$ is regular if and only if ${\supp(\val_\Ncal \dplus \val_\Ncal \dplus -\val_\Ncal)}$ is regular.
\end{proposition}

Unfortunately, the above series is a $\ZZ$-series and testing for the regularity of $\ZZ$-series support is known to be undecidable in the general case \cite{berstel2011}. We introduce, through linear recurrence sequence automata, a subfamily of $\ZZ$-series on which regularity can sometimes be effectively obtained.

\section{Linear recurrence sequence automata}\label{sec:LRSA}

An \emph{integer sequence} ${\sequ}\in\ZZ^\NN$ is \emph{linear recurrence} if there exist a \emph{recurrence order} $m$ and \emph{recurrence coefficients} $\alpha_i\in\ZZ$ such that ${\sequ}_{n+m} = \sum_{i=0}^{m-1}\alpha_i \sequ_{n+i}$ for all $n\geqslant 0$. The \emph{recurrence polynomial} of this sequence is $P_{\sequ} = X^m - \sum_{i=0}^{m-1}\alpha_i X^i$.

Let the \emph{zero sequence} be denoted as $(0)$ and let the \emph{shift operator} $\sigma : \ZZ^\NN \rightarrow \ZZ^\NN$ remove the first element of a sequence, \textit{i.e.\/} $(\sigma{\sequ})_n = {\sequ}_{n+1}$ for all ${\sequ}\in\ZZ^\NN$ and $n\in\NN$.
The \emph{vector space} $E_P$ of the linear recurrence sequences of recurrence polynomial  $P = X^m - \sum_{i=0}^{m-1}\alpha_i X^i$ has dimension $m$ and each linear recurrence sequence ${\sequ}\in E_P$ is completely defined by its \emph{initial vector} $V_{\sequ} = \begin{pmatrix} {\sequ}_0,& \ldots,& {\sequ}_{m-1} \end{pmatrix}$. The initial vector of the shifted sequence $\sigma{\sequ} \in E_P$ can be computed using the \emph{companion matrix} $M_P$ of $P$ as $V_{\sigma{\sequ}} = V_{\sequ} M_P$ where
\[
M_P = \left(\begin{smallmatrix}
     0 & 0 & 0 & \dots & 0 & \alpha_0\\
     1 & 0 & 0 & \dots & 0 & \alpha_1\\
     0 & 1 & 0 & \dots & 0 & \alpha_2\\
     \vdots & \vdots & \ddots & \ddots & \vdots & \vdots\\
     \vdots & \vdots & & \ddots & \ddots & \vdots\\
     0 & 0 & 0 & \dots & 1 & \alpha_{m-1}
 \end{smallmatrix}\right)
\]

The construction of the linear representation from Equation~\ref{lrval} uses the incidence matrix $M_\delta$ to keep track of the contribution of each digit in the representation $\rep_\Ncal(n)$ of a natural number $n$. Taking the equivalent generalized Dumont-Thomas numeration system point of view, the matrix $M_\delta$ is indeed the incidence matrix $M_\tau$ of the associated $\tau$ substitution where $M_\tau[i,j] = |\tau(i)|_j$, the number of occurrences of letter $j$ in the word $\tau(i)$. 
The product $\lambda\mu(\rep_\Ncal(n))\rho$ is a linear combination of several $|\tau^k(i)|$. The contribution of a digit to the final value depends on the current state of the automaton and the position of the digit in the number. The sequence of its contributions is a linear recurrence sequence!

Indeed, by the Cayley-Hamilton theorem, the matrix $M_\tau$ verifies the linear recurrence relation given by its monic characteristic polynomial $P$. As $M_\tau^k[i,j]$ counts the number of occurrences of letter $j$ in $\tau^k(i)$, we have
$|\pi(\tau^k(i))| = \sum_{\pi(j)\neq\varepsilon} M_\tau^k[i,j] = \lambda_i M_\tau^k\rho_\pi$
for some vectors $\lambda_i$ and $\rho_\pi$. All the $|\pi(\tau^k(i))|$ are linear recurrence of polynomial $P$. This is also the case for their linear combinations.

From there, we deduce an alternative linear representation $(\lambda,\mu',\rho')$ for the valuation function $\val_\Ncal$, with coefficients in $\ZZ$, where each digit adds its linear recurrence sequence contribution and $\rho'$ simply extracts the first value of the initial vector of the current sequence.
Let the matrix $R$ be such that $R[i,j]=|\pi(\tau^j(i))|$ for all $i,j\in Q$. By construction, $R M_P = M_\delta R$. Now, let $W_k=V_k R$ for all $k\in A$. Use it to construct the new representation:
\begin{align*}
\lambda&=\begin{pNiceArray}{cccc|ccc}1&0&\cdots&0&0&\cdots&0\end{pNiceArray}&
    \mu' : k\in A &\mapsto \begin{pNiceArray}{c|c}
    \Delta_k & W_k\\\hline
    0 & M_P
\end{pNiceArray}&
\rho'&=\begin{pNiceArray}{ccc|cccc}0&\cdots&0&1&0&\cdots&0\end{pNiceArray}^T
\end{align*}
\noindent
where, for all $i,j\in Q$ and for all $k\in A$ :
\begin{align}\label{defwk}
    W_k[i,j] &= \left|\pi\left(\tau^j\left(\tau(i)_0\cdots\tau(i)_{k-1}\right)\right)\right|
\end{align}

In this linear representation, a vector $\lambda\mu'(w)$ memorizes the current state of the DFA together with the initial vector of a current linear recurrence sequence after reading $w$. A transition applies a shift to the current vector and adds the contribution of the current digit. Linear recurrence sequence automata formalize this particular construction.

\subsection{Sequence automata}

A sequence automaton is a partial deterministic finite automaton $(Q, {A}, \delta, q_0, F)$ equipped with \emph{weight} sequences on its edges, given by a partial vector map $\pi$.

\begin{definition}
A \emph{sequence automaton} is a tuple $(Q, A, \delta, q_0, F, \pi)$ where $Q$ is the \emph{finite set of states}, $A$ is the \emph{finite alphabet of symbols}, $\delta : Q\times A\rightarrow Q$ is the \emph{partial} \emph{transition map}, $q_0\in Q$ is the \emph{initial state}, $F\subseteq Q$ is the set of \emph{accepting states} and $\pi : Q\times A \rightarrow \ZZ^\NN$ is the \emph{partial vector map} of the automaton. The transition and the vector map share the same domain.
\end{definition}

The transition map and vector map are inductively extended from symbols to words as follows, for all $q\in Q$, $w\in A^*$ and $a\in A$:
\begin{align*}
\delta(q,\varepsilon) &= q  &\pi(q,\varepsilon) &= \zseq\\
\delta(q, wa) &= \delta(\delta(q, w), a) &\pi(q, wa) &= \sigma\pi(q, w) + \pi(\delta(q, w), a)\quad.
\end{align*}

The intuition behind a \emph{weight} sequence associated with a transition $\delta(q,a)$ is that symbol $a$ contributes for $\pi(q,a)_n$ to the total weight of the word when it appears in position $n$ in the word, counting from right to left --- the same way that digit $3$ counts for $3\times 10^n$ in decimal.

\begin{lemma}
For all $q\in Q$ and $w,w'\in A^*$, when $\delta(q,ww')$ is defined, by construction
$
\pi(q,ww') = \sigma^{|w'|}\pi(q,w) + \pi(\delta(q,w), w')
$.
\end{lemma}

\begin{definition}
    A sequence automaton is \emph{linear recurrence (LRSA)} if every sequence appearing in the vector map is a linear recurrence sequence. The \emph{recurrence polynomial} of a LRSA is the least common multiple of the characteristic polynomials of every linear recurrence sequence in the image of its vector map $\pi$.
\end{definition}

\begin{definition}
    A sequence automaton is \emph{scalar} if there exists a \emph{reference sequence} $\sequ$ of which every sequence appearing in the vector map is a multiple, \textit{i.e.\/} for all $q$ and $a$ such that $\pi(q,a)$ is defined, there exists $\alpha\in\ZZ$ such that $\pi(q,a)=\alpha\sequ$.
\end{definition}

\subsection{Combining sequence automata}

The \emph{scalar product} $\alpha\Acal$ of a sequence automaton $\Acal$ by $\alpha\in\ZZ$ is the same sequence automaton where the vector map $\pi$ is replaced by $\alpha\pi : \sequ \mapsto \alpha\sequ$.

The \emph{addition} $\Acal_1 + \Acal_2$ of two sequence automata $\Acal_i = (Q_i,A_i,\delta,s_i,F_i,\pi_i)$ with $i\in\{1,2\}$ is the product sequence automaton combining the vector maps. Formally, $(Q_1\times Q_2, A_1\times A_2, \delta, (s_1,s_2), F_1\times F_2, \pi)$ where $\delta( (q_1,q_2), (a_1, a_2) ) = (\delta_1(q_1,a_1), \delta_2(q_2,a_2))$ and $\pi( (q_1,q_2), (a_1, a_2) ) = \pi_1(q_1,a_1) + \pi_2(q_2,a_2)$ for all $q_1\in Q_1$, $q_2\in Q_2$, $a_1\in A_1$ and $a_2\in A_2$.

\begin{lemma}
    The recurrence polynomial of a linear combination of LRSA divides the least common multiple of the recurrence polynomials of each automaton of the combination.
\end{lemma}

\subsection{Series of sequence automata and their support}

\begin{definition}
    The \emph{series $s_\Acal$} of a sequence automaton $\Acal=(Q,A,\delta,q_0,F,\pi)$ maps every word $w\in A^*$ to $\pi(q_0,w)[0]$.
\end{definition}

\begin{definition}
The \emph{language $L_0(\Acal)$} of a sequence automaton $\Acal=(Q,A,\delta,q_0,F,\pi)$ is the set of words recognized by the automaton with value $0$, \textit{i.e.\/}
\[
L_0(\mathcal{A}) = \left\{ w\in A^* \mid \delta(q_0, w) \in F \land s_\Acal(w) = 0\right\}
\]
\end{definition}

A simple idea, to transform a sequence automaton into a deterministic automaton recognizing the same language, is to remove the vectors from the transitions and insert them into the states. Alas, the induced flattening deterministic automaton generated this way is potentially infinite.

\begin{definition}
The \emph{flattening} of a sequence automaton $(Q,A,\delta,q_0,F,\pi)$ is the infinite deterministic automaton $(Q', A, \delta', q_0', F')$ where
\begin{align*}
    Q' &= Q\times\ZZ^\NN\\
    \delta'\left( (q,\sequ), a \right) &= \left( \delta(q,a), \sshift\sequ + \pi(q,a) \right)\quad\forall q\in Q, \sequ\in\ZZ^\NN, a\in A\\
    q_0' &= \left( q_0, \zseq  \right)\\
    F' &= \left\{ (q,\sequ) \mid q\in F \land \sequ_0 = 0 \right\}
\end{align*}
\end{definition}

To \emph{trim} a deterministic automaton, one only keeps states accessible from the initial state and co-accessible from the accessible accepting states.

\begin{lemma}\label{lemtrim}
When its trimmed flattening is finite, the language of a sequence automaton is regular.
\end{lemma}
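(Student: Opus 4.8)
The plan is to show that the "flattening", read as a (possibly infinite) partial deterministic automaton, recognizes exactly the "language $L_0(\mathcal{A})$", so that finiteness of its "trimmed" version yields regularity with no further work. The crux is a single bookkeeping identity: running a word through the flattening from its initial state faithfully tracks the pair (current state, accumulated weight "sequence") of the underlying "sequence automaton".

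First I would prove, by induction on $|u|$, that
\[
\delta'\left((q_0, \zseq), u\right) = \left(\delta(q_0, u),\, \pi(q_0, u)\right)
\]
for every $u\in\Sigma^*$ on which $\delta(q_0,u)$ is defined. The base case $u=\varepsilon$ holds because $\delta(q_0,\varepsilon)=q_0$ and $\pi(q_0,\varepsilon)=\zseq$, matching $q_0'=(q_0,\zseq)$. For the inductive step $u=va$, applying the definition of $\delta'$ to the inductive hypothesis produces second component $\sshift\pi(q_0,v)+\pi(\delta(q_0,v),a)$, which is exactly $\pi(q_0,va)$ by the inductive extension of the "vector map"; the first component agrees by associativity of $\delta$. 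The partiality of $\delta$ (hence of $\delta'$) causes no difficulty, since both sides are defined on precisely the same words.

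Next I would simply read off the acceptance condition. By the definition of $F'$, a word $u$ is accepted by the flattening if and only if $\delta(q_0,u)\in F$ and $\left(\pi(q_0,u)\right)_0=0$, which is verbatim the defining condition of $L_0(\mathcal{A})$. Thus the flattening recognizes $L_0(\mathcal{A})$. Since "trimming" a deterministic automaton deletes only states that lie on no accepting run---every state on a run ending in an "accepting state" is both accessible and co-accessible, hence is retained---the trimmed flattening recognizes the same language. If this trimmed flattening is finite, it is a finite deterministic automaton for $L_0(\mathcal{A})$, and the language is therefore regular.

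There is no genuinely hard step in this lemma: the mathematical content is the bookkeeping identity above, and the remainder is standard automata theory. The only point worth flagging is that the flattening is a priori infinite, so the statement is deliberately conditional; the substantive problem of identifying \emph{when} the trimmed flattening is actually finite---for the addressing sequence automata arising under good "Pisot" hypotheses---is the real work, and it is postponed to the later sections rather than addressed here.
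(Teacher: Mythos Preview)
Your proposal is correct and follows exactly the paper's own argument: establish the bookkeeping identity $\delta'(q_0',u)=(\delta(q_0,u),\pi(q_0,u))$, read off that the flattening recognizes $L_0(\mathcal{A})$, and conclude regularity from finiteness of the trimmed automaton. You merely spell out the induction and the language-preservation of trimming more explicitly than the paper does.
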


\begin{remark}
    The converse is not true: the language of a LRSA can be regular without the trimmed flattening being finite.
\end{remark}

\begin{lemma}
When the support of its series is regular, the language of a sequence automaton is regular.
\end{lemma}

\section{The Pisot condition}

For LRSA with Pisot recurrence polynomial, the trimmed flattening is always finite. In the scalar case, it is obtained through an adaptation of bounds established by Bruyère and Hansel \cite{bruhan} to study the addition of Bertrand numeration systems. In the general LRSA case the scalar bound can be adapted by combining the polynomial representation from Frougny and Solomyak \cite{frousol} with the change of basis technique from Frougny \cite{frougny2002}.

A \emph{Pisot-Vijayaraghavan number} $\theta$ is an algebraic integer that is the dominant root of its minimal monic polynomial $P(X)$ with integer coefficients, where $P(X)$ is irreducible over $\ZZ$ and admits $n$ complex roots $\theta_1$, \ldots, $\theta_n$, all distinct, satisfying $\theta=\theta_1 > 1 > |\theta_2| \geqslant \ldots \geqslant |\theta_n| > 0$. The set of power sequences of the roots $\left( \theta_i^k \right)_{k\in\NN}$ is a base of the vector space $\lrec_P$.

A linear recurrence sequence $\sequ$ is \emph{Pisot} if its characteristic polynomial is the minimal polynomial $P(X)$ of a Pisot-Vijayaraghavan number. It is \emph{Pisot, up to shift,} if $\sshift^k\sequ$ is Pisot for some $k\geqslant 0$, that is if its polynomial is $X^k P(X)$. By extension, such a polynomial is said to be \emph{Pisot, up to shift}.

\subsection{Scalar Pisot LRSA}

\begin{theorem}\label{thm:scalar}
The language of a scalar LRSA with a reference sequence that is Pisot, up to shift, is regular.
\end{theorem}

\begin{proof}
Let $(Q, {A}, \delta, q_0, F, \pi)$ be a scalar LRSA with reference sequence $\sequ$. Assume that $\sequ$ is Pisot, up to shift, and let $X^k P(X)$ be the characteristic polynomial of $\sequ$. Let  $\theta=\theta_1$, \ldots, $\theta_n$ be the roots of $P$ satisfying $\theta=\theta_1 > 1 > |\theta_2| \geqslant \ldots \geqslant |\theta_n| > 0$.
 Let $\omega : Q\times{A} \rightarrow \ZZ$ retrieve the scalar coefficient of the weight so that $\pi(q,a) = \omega(q,a)\sequ$ for all $q\in Q$ and $a\in{A}$. Extend $\omega$ inductively to $\omega : Q\times{A}^* \rightarrow \ZZ^*$, where $(\ZZ^*,\cdot)$ denotes the monoid of words on $\ZZ$, by $\omega(q,\varepsilon)=\varepsilon$ and $\omega(q,wa)=\omega(q,w)\cdot\omega(\delta(q,w),a)$ for all $q\in Q$, $w\in{A}^*$ and $a\in{A}$.

 As the number of transitions of the sequence automaton is finite, the scalar coefficients are bounded. Let $C = \max_{(q,a)\in Q\times{A}} |\omega(q,a)|$. The coefficients all lie in the interval $\mathcal{C} = \{-C, \ldots, C\}$.

 Let $\pi_\sequ : \ZZ^*\rightarrow \ZZ$ assign a weight to every finite sequence of integers according to $\sequ$ by ${\pi_\sequ(a_m\cdots a_0) = \sum_{i=0}^{m} a_i \sequ_i}$ for all $a_m\cdots a_0 \in \ZZ^*$. By construction, $(\pi(q, w))_i = \pi_\sequ\left(\omega(q, w)0^i\right)$, or equivalently $(\pi(q, w))_i = \pi_{\sshift^i \sequ}\left(\omega(q, w)\right)$, for all $i\in\NN$, $q\in Q$ and $w\in{A}^*$.

 The key argument, from \cite{bruhan}, is to approximate $\pi_\sequ$ using powers of $\theta$. As $\sigma^k\sequ \in \lrec_P$ there exist $\gamma_1, \ldots, \gamma_n\in \CC$ such that $\sequ_{m+k} = \sum_{i=1}^n \gamma_i\theta_i^m$ for all $m\in\NN$. Moreover $\gamma = \gamma_1$ is a real number. Let $\pi_\theta : \ZZ^*\rightarrow \RR$ assign a weight to every finite sequence $a_m\cdots a_0\in\ZZ^*$  by ${\pi_\theta(a_m\cdots a_0) = \gamma\sum_{i=k}^{i=m} a_i \theta^{i-k}}$, ignoring the first $k$ values.
First, let us bound the approximation error by $C$ times a constant $K$. Let $a_m\cdots a_0 \in \mathcal{C}^*$,
\begin{align*}
    \pi_\sequ(a_m\cdots a_0) - \pi_\theta(a_m\cdots a_0) &= \sum_{i=0}^{k-1} a_i\sequ_i - \sum_{i=k}^m a_i\sum_{j=2}^n \gamma_j\theta_j^{i-k}\\
    &= \sum_{i=0}^{k-1} a_i\sequ_i - \sum_{j=2}^n\gamma_j\sum_{i=k}^m a_i\theta_j^{i-k}\\
    \mbox{thus}\quad\left| \pi_\sequ(a_m\cdots a_0) - \pi_\theta(a_m\cdots a_0) \right| &\leqslant C\left( \sum_{i=0}^{k-1} |\sequ_i| + \sum_{j=2}^n|\gamma_j|\sum_{i=k}^m |\theta_j|^{i-k} \right)\\
    \left| \pi_\sequ(a_m\cdots a_0) - \pi_\theta(a_m\cdots a_0) \right| &\leqslant C\underbrace{\left( \sum_{i=0}^{k-1} |\sequ_i| + \sum_{j=2}^n\frac{|\gamma_j|}{1-|\theta_j|}\right)}_{=K}
\end{align*}
\noindent
Consider now an accepting path of the automaton. Let $w, w'\in{A}^*$ such that $\pi(q_0,ww')$ is defined and verifies $(\pi(q_0,ww'))_0 = 0$. Let $m=|w'|$. Let $q=\delta(q_0,w)$. As $\pi_\sequ(\omega(q_0, ww')) = 0$, we have
\begin{alignat*}{2}
    -CK&\leqslant& \pi_\theta(\omega(q_0, ww')) &\leqslant CK\\
    \mbox{thus}\quad -CK&\leqslant& \pi_\theta(\omega(q_0, w)0^m) + \pi_\theta(\omega(q, w')) &\leqslant CK\\
    -C\left(K + \frac{\gamma\theta}{\theta-1} \theta^m\right)&\leqslant& \pi_\theta(\omega(q_0, w))\theta^m &\leqslant C\left(K + \frac{\gamma\theta}{\theta-1} \theta^m\right)\\
    \mbox{then diving by $\theta^m$,}&& \left| \pi_\theta(\omega(q_0, w)) \right| &\leqslant C\left(K + \frac{\gamma\theta}{\theta-1} \right)
\end{alignat*}
\noindent
And finally we bound $\pi_\sequ\left(\omega(q_0,w)0^i\right)$ for all $i$ to prepare the bound for $V_{\pi(q_0,w)}$:
\begin{align*}
    \left| \pi_\sequ\left(\omega(q_0,w)0^i\right) \right| &\leqslant \left| \pi_\sequ\left(\omega(q_0,w)0^i\right) - \pi_\theta\left(\omega(q_0,w)0^i\right) \right| + \left| \pi_\theta\left(\omega(q_0,w)0^i\right) \right|\\
    &\leqslant C\left( K + \left(K + \frac{\gamma\theta}{\theta-1}\right)\theta^i \right)
\end{align*}
$V_{\pi(q_0,w)} = (\pi_\sequ\left(\omega(q_0,w)\right), \pi_{\sshift\sequ}\left(\omega(q_0,w)\right), \cdots, \pi_{\sshift^{m-1}\sequ}\left(\omega(q_0,w)\right))$ is bounded thus the trimmed flattening of the sequence automaton is finite.\qed
\end{proof}

Note that the whole process is constructive. Given a scalar LRSA, one can effectively construct the bounded flattening by encoding sequences as initial vectors and by using the companion matrix to compute transitions.

\subsection{Pisot LRSA}

Consider a sequence $\sequ\in\lrec_P$ verifying the recurrence relation given by some monic polynomial $P$ of degree $m$. The set of shifted sequences $\{ \sshift^i\sequ \}_{i\in\NN}$ generates a subspace of $\lrec_P$. It generates the whole space $\lrec_P$ if and only if the following associated Hankel matrix is invertible:
\[
H_\sequ = \left(\begin{smallmatrix}
    \sequ_0 & \sequ_1 & \hdots & \sequ_{m-1} \\
    \sequ_1 & \sequ_2 & \hdots & \sequ_{m} \\
    \vdots & \vdots & \ddots & \vdots \\
    \sequ_{m-1} & \sequ_m & \hdots & \sequ_{2m-2} \\
 \end{smallmatrix}\right)
\]
In this case, for every sequence $\seqv\in\lrec_P$, we have $\seqv = \sum_{i=0}^{m-1} \alpha_i \sshift^i\sequ$ where ${(\alpha_0,\ldots,\alpha_{m-1}) = V_\seqv H_\sequ^{-1}}$. By letting $R = \sum_{i=0}^{m-1}\alpha_i X^i$, one can write $\seqv = R(\sshift)(\sequ)$ and see every sequence of $\lrec_P$ as a polynomial in $\sequ$ with rational coefficients. The choice of the sequence $\sequ$ is free. One might for example choose the sequence of initial vector $(0,\ldots,0,1)$ for which $H_\sequ$ is always invertible, as it is anti-triangular, with a determinant of value $1$ or $-1$, ensuring integer coefficients for the polynomials.

\begin{theorem}\label{thm:lrsa}
The language of a LRSA with a recurrence polynomial that is Pisot, up to shift, is regular.
\end{theorem}

\begin{proof}
Let $\mathcal{A}$ be a LRSA $(Q, {A}, \delta, q_0, F, \pi)$ with recurrence polynomial $P$ of degree $m$. Assume that $P$ is Pisot, up to shift, and let $\sequ\in\lrec_P$ be a sequence such that the set of its shifted sequences $\{ \sshift^i\sequ \}_{i\in\NN}$ generates $\lrec_P$.
Let $\omega : Q\times{A} \rightarrow \QQ[X]$ retrieve a polynomial representation, of degree at most $m-1$, of the weight so that $\pi(q,a) = \omega(q,a)(\sshift)(\sequ)$ for all $q\in Q$ and $a\in{A}$. Extend $\omega$ inductively to $\omega : Q\times{A}^* \rightarrow \QQ[X]$ by $\omega(q,\varepsilon)=0$ and $\omega(q,wa)=\omega(q,w)X+\omega(\delta(q,w),a)$ for all $q\in Q$, $w\in{A}^*$ and $a\in{A}$. By construction, $(\pi(q, w))_i = \left(\omega(q, w) X^i\right)(\sshift)(\sequ)[0]$ for all $i\in\NN$, $q\in Q$ and $w\in{A}^*$.

As the number of transitions of the automaton is finite, the polynomial representations have bounded coefficients. Let $C = \max_{\sum_{i=0}^{m-1}\alpha_i X^i\in \pi(Q,{A})} |\alpha_i|$. The coefficients all lie in the interval $\mathcal{C} = \{-C, \ldots, C\}$. Moreover, the coefficients of $(\pi(q, w))_i$ all lie in the interval $m\mathcal{C} = \{-mC, \ldots, mC\}$ for all $i\in\NN$, $q\in Q$ and $w\in{A}^*$.

When the automaton $\mathcal{A}$ admits a loop $\delta(q_0,a)=q_0$ with weight $\pi(q_0,a)=0$ for some symbol $a$, one can construct a scalar LRSA $\mathcal{B}$ with reference sequence $\sequ$ so that $\mathcal{A}$ and $\mathcal{B}$ recognize the same language. To do so, $\mathcal{B}$ anticipates, with a sliding windows on tuples of letters, the future coefficients of the polynomials: instead of associating the weight $\sum_{i=0}^{m-1}\alpha_i X^i$ to a transition $\delta(q,b)$, the weights $\alpha_i$ for $i>0$ are added to the weights of the $m-1$ previous symbols. By Theorem~\ref{thm:scalar}, the language of $\mathcal{B}$ is regular.

If the sequence automaton admits no loop of weight zero on its initial state, consider adding to the automaton $\mathcal{A}$ a new padding symbol $\#\not\in{A}$ and a loop $\delta(q_0,\#)=q_0$ of weight $\pi(q_0,\#)=0$ to obtain a sequence automaton $\mathcal{A}^\#$. By construction,
${L_0(\mathcal{A}) = \{ u\in{A}^* | \exists k<m, \#^k u\in L_0(\mathcal{A}^\# ) \}}$. By previous argument, the language of $\mathcal{A}^\#$ is regular,  and so is the language of $\mathcal{A}$.\qed
\end{proof}

Note that the whole process is still constructive and that the proof provides an effective method to transform the sequence automaton into an automaton of a single sequence. 

\section{Applications}

Using the expressivity of LRSA, we can answer our initial question in the Pisot case and handle several other constructions.

\medskip
\par\noindent\textbf{Addition of Pisot ANSZ~}
Following Equation~\ref{defwk}, the \emph{addressing sequence automaton} $\Acal_\tau$ associated with the fixpoint $\pi(\tau^\omega(a))$ of a generalized Dumont-Thomas numeration system is obtained by adding a vector map $\pi$ to $\Ncal_\tau$ as $\pi(b,k) = \left(\left|\pi\left(\tau^n\left(\tau(b)_0\cdots\tau(b)_{k-1}\right)\right)\right|\right)_{n\in\NN}$ for all $b\in A$ and $k\in B$. The series of this sequence automaton is the valuation function of the numeration system. It is a LRSA whose recurrence polynomial divides the characteristic polynomial of the incidence matrix of $\tau$ (see for example Fig.~\ref{fig:seqpsi}). By combining Proposition~\ref{prop:addrel} with Theorem~\ref{thm:lrsa}, Theorem~\ref{addDT} is proved.

\medskip
\par\noindent\textbf{Conversion between ANSZ~}
Let $\Ncal$ and $\Ncal'$ be two ANSZ whose valuation addressing sequence automata share a same Pisot recurrence polynomial, up to shift. Both ANSZ are equivalent: the series $\val_\Ncal \dplus -\val_{\Ncal'}$ has a regular support so the conversion between both ANSZ is regular. In particular, the equivalence to the canonical Bertrand numeration system \cite{bruhan} is effectively constructed.

\medskip
\par\noindent\textbf{Parikh vectors of fixpoint prefixes~} Using the same techniques, the Parikh vectors of prefixes of the fixpoint $\pi(\tau^\omega(a))$ of a generalized Dumont-Thomas numeration system can be obtained by constructing, for each letter $b\in B$, the erasing coding $\pi_b$ that erases every letter but $b$ and computing the series $\val_\Ncal \dplus -\val_{\Ncal_b}$ where $\Ncal$ is the generalized Dumont-Thomas numeration system of $\pi(\tau^\omega(a))$ and $\Ncal_b$ the one of $(\pi_b\circ\pi)(\tau^\omega(a)))$.

\section{Practical considerations}

We have developed a prototype tool, \verb+licofage+ \cite{licofage}. Given a substitution, \verb+licofage+ can produce its addressing sequence automaton and compute linear combinations of their series. In the Pisot case, or when given a manually selected bound, it produces the trimmed flattening deterministic finite automaton that computes a relation defined using linear combinations of sequence automata series (addition, conversion, \textit{etc}). The generated output can be seamlessly integrated into Walnut \cite{walnuta}. A companion notebook is available on arXiV to showcase the capabilities of the proposed method, accessible via the following URL: \url{https://arxiv.org/src/2406.09868/anc}.

\newpage
\pagestyle{empty}
\appendix
\section{Additional proofs}

\subsection{Proof of Proposition~\ref{progdtns}}

\begin{proof}
Let $\Ncal$ be a ANSZ $(L,A,<,0)$ and let $\Acal=(Q,A,\delta,q_0,F)$ be a minimized trimmed DFA recognizing $L$. Let $k$ be the maximum number of transitions from a state of $\Acal$.
Let $\Acal'$ be the DFA $(Q,\{0,\ldots,k-1\},\delta',q_0,F)$ where $\delta'(q,i)=q'$ is defined for all $q,q'\in Q$ and $i<k$ if $q$ admits at least $i+1$ transitions and $\delta(q,a)=q'$ where $a$ is the label of the $i$th transition starting from $q$ according to $<$.
The canonical ANSZ associated to $\Ncal$ is $(L(\Acal'),\{0,\ldots,k-1\},<,0)$. It is equivalent to $\Ncal$ by a simple state-dependent rewriting of transitions. Moreover, it is the generalized Dumont-Thomas numeration system of associated to $\pi(\tau^\omega(a))$ where $a=q_0$, $\tau : q \mapsto \delta(q,0)\cdots\delta(q,k_q)$ where $k_q$ is the number of transitions starting from $q$ and $\pi(q)=q$ if $q\in F$ and $\pi(q)=0$ otherwise.\qed
\end{proof}

\subsection{Proof of Lemma~\ref{lemtrim}}
\begin{proof}
    The language of the sequence automaton is equal to the language of its flattening. Indeed, for every $w\in A^*$ such that $\delta(q_0,w)$ is defined, we have ${\delta'(q_0',w) = (\delta(q_0,w),\pi(q_0,w))}$. Thus $\delta'(q_0',w) \in F'$ if and only if $\delta(q_0,w)\in F$ and $\left(\pi(q_0,w)\right)_0 = 0$.\qed
\end{proof}

\clearpage
\section{Extra material}
\subsection{$\ZZ$-rational valuation of $\Ncal_\tau$}

Following the paper, we provide the explicit linear presentations for the valuation obtained for ${\tau : a\mapsto aab,\; b\mapsto cab,\; c\mapsto a}$ using both techniques. One can easily check that both linear representation are equivalent, they both compute the same series.

\subsubsection{Following Section~\ref{sec:val}}
Let us recall the formulas:
\begin{align*}
\lambda&=\begin{pNiceArray}{cccc|ccc}1&0&\cdots&0&0&\cdots&0\end{pNiceArray}&
    \mu : k\in A &\mapsto \begin{pNiceArray}{c|c}
    \Delta_k & V_k\\\hline
    0 & M_\delta
\end{pNiceArray}&
\rho&=\begin{pNiceArray}{ccc|c}0&\cdots&0&\chi_F\end{pNiceArray}^T
\end{align*}

\noindent
where, for all $i,j\in Q$ and for all $k\in A$ :
\begin{align*}
M_\delta[i,j] &= \#\{ a \in A | \delta(i,a) = j \}&
V_k[i,j] &= \#\{ a < k | \delta(i,a) = j \}\\
\chi_F[i] &= \begin{cases}
    1 & \mbox{if~} i \in F\\
    0 & \mbox{otherwise}
\end{cases}&
\Delta_k[i,j] &= \begin{cases}
    1 & \mbox{if~} \delta(i,k) = j\\
    0 & \mbox{otherwise}
\end{cases}
\end{align*}
In the particular case of $\tau$ we have $A=\{0,1,2\}$ and:
\begin{align*}
\lambda &= \begin{pmatrix}
   1&0&0&0&0&0 
\end{pmatrix}&
M_\delta &= \begin{pmatrix}
    2&1&0\\
    1&1&1\\
    1&0&0
\end{pmatrix}&
\rho &= \begin{pmatrix}
    0&0&0&1&1&1
\end{pmatrix}^T\\
\Delta_0 &= \begin{pmatrix}
    1&0&0\\
    0&0&1\\
    1&0&0
\end{pmatrix}&
\Delta_1 &= \begin{pmatrix}
    1&0&0\\
    1&0&0\\
    0&0&0
\end{pmatrix}&
\Delta_2 &= \begin{pmatrix}
    0&1&0\\
    0&1&0\\
    0&0&0
\end{pmatrix}\\
V_0 &= \begin{pmatrix}
    0&0&0\\
    0&0&0\\
    0&0&0
\end{pmatrix}&
V_1 &= \begin{pmatrix}
    1&0&0\\
    0&0&1\\
    1&0&0
\end{pmatrix}&
V_2 &= \begin{pmatrix}
    2&0&0\\
    1&0&1\\
    1&0&0
\end{pmatrix}
\end{align*}

\subsubsection{Following Section~\ref{sec:LRSA}} Let us recall the formulas:
\begin{align*}
\lambda&=\begin{pNiceArray}{cccc|ccc}1&0&\cdots&0&0&\cdots&0\end{pNiceArray}&
    \mu' : k\in A &\mapsto \begin{pNiceArray}{c|c}
    \Delta_k & W_k\\\hline
    0 & M_P
\end{pNiceArray}&
\rho'&=\begin{pNiceArray}{ccc|cccc}0&\cdots&0&1&0&\cdots&0\end{pNiceArray}^T
\end{align*}

\noindent
where, for all $i,j\in Q$ and for all $k\in A$ :
\begin{align*}\label{defwk}
    W_k[i,j] &= \left|\pi\left(\tau^j\left(\tau(i)_0\cdots\tau(i)_{k-1}\right)\right)\right|
\end{align*}
In the particular case of $\tau$ we have $A=\{0,1,2\}$ and:
\begin{align*}
\lambda &= \begin{pmatrix}
   1&0&0&0&0&0 
\end{pmatrix}&
M_P &= \begin{pmatrix}
    0&0&1\\
    1&0&-1\\
    0&1&3
\end{pmatrix}&
\rho' &= \begin{pmatrix}
    0&0&0&1&0&0
\end{pmatrix}^T\\
W_0 &= \begin{pmatrix}
    0&0&0\\
    0&0&0\\
    0&0&0
\end{pmatrix}&
W_1 &= \begin{pmatrix}
    1&3&9\\
    1&1&3\\
    1&3&9
\end{pmatrix}&
W_2 &= \begin{pmatrix}
    2&6&18\\
    2&4&12\\
    1&3&9
\end{pmatrix}
\end{align*}

\subsubsection{Equivalence of representations} The key argument is to define the matrix $R$ such that $R[i,j] = |\pi(\tau^j(i))|$ for all $i,j\in Q$ and verify that $W_k=V_k R$ and $R M_P = M_\delta R$ for all $k\in A$. Here:
\begin{align*}
    R &= \begin{pmatrix}
        1&3&9\\
        1&3&7\\
        1&1&3
    \end{pmatrix}
\end{align*}
From there, on gets that $\lambda\mu'(w) = \lambda\mu(w)
\begin{pNiceArray}{c|c}
    \operatorname{Id} & 0\\\hline
    0 & R 
\end{pNiceArray}$ and thus $\lambda\mu'(w)\rho' = \lambda\mu(w)\rho$ for all $w\in A^*$ as $\rho = \begin{pNiceArray}{c|c}
    \operatorname{Id} & 0\\\hline
    0 & R
\end{pNiceArray}\rho'$.
\
\end{document}